\newtheorem{theorem}{Theorem}
\newtheorem{corollary}[theorem]{Corollary}
\newtheorem{lemma}[theorem]{Lemma}
\numberwithin{equation}{section}
\DeclareMathOperator{\KS}{\mathit{C}}
\DeclareMathOperator{\KP}{\mathit{K}}
\newcommand{\logg}{\log^{(2)}}
\newcommand{\loggg}{\log^{(3)}}
\newcommand{\la}{\langle}
\newcommand{\ra}{\rangle}
\begin{document}

\title[Complexity of complexity and maximal $C$ and $K\,$-\,complexity]
{Complexity of complexity and strings with maximal plain and prefix Kolmogorov complexity}

\author{B. Bauwens}
\revauthor{Bauwens, Bruno}
\address{
LORIA, Universit\'e de Lorraine, 
615-B248 Rue du Jardin Botanique,
54506 Vand\OE vre-l\`es-Nancy, France
\urladdr{www.bcomp.be}.
}

\author{A. Shen}
\revauthor{Shen, Alexander}
\address{LIRMM CNRS \& University of Montpellier, 2.
 UMR 5506 - CC477, 161 rue Ada, 34095 Montpellier Cedex 5, France. On leave from IITP RAS, Moscow. }
\thanks{%
    Bruno Bauwens is supported by the Portuguese science foundation FCT
    (SFRH/BPD/75129/2010); and partially supported by $CSI^2$ (PTDC/EIA-CCO/099951/2008) and NAFIT ANR-08-EMER-008-01 projects.
    Alexander Shen is supported by NAFIT ANR-08-EMER-008-01 project.   
    Part of this paper was published in the proceedings of the 39-th International Conference on
    Automata, Languages and Programming (ICALP2012). 
    The authors are grateful to Elena Kalinina and Nikolay Vereshchagin for useful discussions.
    }

\begin{abstract}
   P\'eter G\'acs showed~(G\'acs 1974) that for every $n$ there exists a bit string $x$ of length $n$
   whose plain complexity $\KS(x)$ has almost maximal conditional complexity relative to $x$, i.e.,
   $\KS(\KS(x)|x)\ge \log n - \logg n - O(1)$. (Here $\logg i =\log\log i$.) 
   Following Elena Kalinina~(Kalinina 2011), we provide a simple game-based proof of this result; modifying her
   argument, we get a better (and tight) bound $\log n - O(1)$.  We also show the same bound for
   prefix-free complexity.
   
   \smallskip
   Robert Solovay showed~(Solovay 1975) that
   infinitely many strings $x$ have maximal plain complexity but not maximal prefix
   complexity (among the strings of the same length): for some $c$ there exist infinitely many $x$ such that  $|x| - \KS(x) \leq c$ and
   $|x| + \KP(|x|) - \KP(x) \geq \logg |x| - c\loggg |x|$.
   In fact, the results of Solovay and G\'acs are closely related.
   Using the result above, we provide a short proof for Solovay's result. We also generalize it 
   by showing that for some $c$ and for all $n$ there are strings $x$ of length $n$ 
   with $n - \KS(x) \leq c$
   and 
   $$ 
    n + \KP(n) - \KP(x) \geq \KP(\KP(n)|n) - 3\KP(\KP(\KP(n)|n)|n) - c.
   $$ 
   We also prove a close upper bound $\KP(\KP(n)|n) + O(1)$.
      \smallskip
   
   Finally, we provide a direct game proof for Joseph Miller's
   generalization~(Miller 2006) of the same Solovay's theorem: if a co-enumerable set (a
   set with c.e. complement) contains for every length a string of this length, then it contains
   infinitely many strings $x$ such that 
   $$
    |x| + K(|x|) - K(x) \geq  \logg |x| - O(\loggg |x|).
   $$
\end{abstract}

\maketitle  

\section*{Introduction}

Plain Kolmogorov complexity $\KS(x)$ of a binary string $x$ was defined in~\cite{Kolmogorov65} as the
minimal length of a program that computes $x$. 
(See, e.g.,~\cite{GacsNotes,LiVitanyi,ShenIntro,ZvonkinLevin} for more details.) 
It was clear from the beginning that this complexity function
is not computable: no algorithm can compute $\KS(x)$ given $x$. In~\cite{complexityOfComplexity} (see also~\cite{GacsNotes,LiVitanyi}) a stronger
non-uniform version of this result was proven: for every $n$ there exists a string $x$ of length $n$
such that conditional complexity $\KS(\KS(x)|x)$, i.e., the minimal length of a program that maps
$x$ to $\KS(x)$, is at least $\log n - \logg n - O(1)$. (If the complexity function were computable, this
conditional complexity would be bounded.) 

In Section~\ref{sec:gacs} we revisit this classical result and improve it a bit by removing the
$\logg n$ term.\footnote{
    Note added in proof: alternatively, this improvement can also be shown using~\cite[Theorem~3.1]{StephanEnumerationsK} 
    (and Theorem~5.1 for prefix complexity).
  }
No further improvement is possible because $\KS(x) \leq n + O(1)$ for every string $x$ of length $n$, therefore
$\KS(\KS(x)|x) \leq \log n + O(1)$ for all such~$x$. 
We also prove that we can guarantee $\KS(x) \ge n/2$ (in addition to $\KS(\KS(x)|x)\ge\log n-O(1)$),
which was (in weaker form) 
conjectured by Robert~Solovay and Gregory~Chaitin, and mentioned as Conjecture 3.14.6 on p.~145 in \cite{Downey}.

We use a game technique that was developed by Andrej
Muchnik (see~\cite{KolmogorovGames,muchnikGame,VerSurvey}) and turned out to be useful in many cases.
Recently Elena Kalinina (in her master thesis~\cite{kalinina}) used it to provide a proof of G\'acs'
result. We use a more detailed analysis of essentially the same game to get a better bound.

\smallskip
In Section~\ref{sec:solovay} we use this improved bound to provide a simple proof of an old result
due to Solovay. The complexity $\KS(x)$ of an $n$-bit string $x$ never exceeds $n+O(1)$, and
for most $n$-bit strings $x$ the value of $\KS(x)$ is close to $n$. Such strings may be called
``$C$-random''. There is another version of complexity, called prefix complexity, where the programs
are assumed to be self-delimiting (see~\cite{GacsNotes,LiVitanyi,ShenIntro} for details). For an
$n$-bit string $x$ its prefix complexity $\KP(x)$ does not exceed $n+\KP(n)+O(1)$, and for most
$n$-bit strings $x$ the value of $\KP(x)$ is close to $n+\KP(n)$.  Such strings may be called
$K$-random\footnote{In \cite{Downey}, such strings are called ``strongly $K$-random'', in contrast 
    to the ``weakly $K$-random'' strings $x$, which only satisfy $\KP(x) \ge |x|-O(1)$.}. 

A natural question arises: how ``$C$-randomness'' and ``$K$-randomness'' are related? This question
was studied by Solovay who proved that $K$-randomness implies $C$-ran\-dom\-ness but not vice versa
(see the unpublished notes~\cite{Solovay} and its exposition in~\cite{Downey}). More precisely, consider the ``randomness deficiencies'' $d_C(x)=|x|-\KS(x)$ and $d_K(x)=|x|+\KP(|x|)-\KP(x)$. Solovay proved that:
\begin{enumerate}
  \item\label{sol1} $d_C(x) \le O(d_K(x))$;
  \item\label{sol2} the reverse statement can be proved with additional error term: 
    $$
     d_K(x)\le O(d_C(x))+\logg n
    $$  
    for every $n$-bit string $x$; 
  \item\label{sol3} the error term cannot be deleted: there exists a constant $c$ and infinitely many
    strings $x$ such that $d_C(x)\le c$ but 
    $$ 
     d_K(x)\ge \logg |x| -O(\loggg |x|).
    $$ 
\end{enumerate}

Using the result of Section~\ref{sec:gacs}, we provide a short proof for statement (3), the most difficult one, even in a stronger form where $O(\loggg |x|)$ is replaced by $O(1)$. Then we prove a stronger statement about strings of fixed length $n$, with close lower and upper bounds:
\begin{itemize}
  \item $d_K(x)\le O(d_C(x))+\KP(\KP(n)|n)$ for every $n$-bit string $x$;
  \item for some constant $c$ and for every $n$ there exist a string $x$ of length $n$ 
    such that $d_C(x) \leq c$ and 
    $d_K(x)\geq \KP(\KP(n)|n) - 3\KP(\KP(\KP(n)|n)|n) - c.$
\end{itemize} 
It is stronger because the result of Section 1 then allows us to choose $n$ in such a way that $\KP(\KP(n)|n)=\logg n+O(1)$; this choice also makes the other term $O(1)$. 

\smallskip
 Finally, in Section~\ref{sec:miller} we give another example of game technique by presenting a
 simple proof of a different generalization of Solovay's result. This generalization is due to
 Miller \cite{MillerContrasting}: every co-enumerable set (a set with c.e. complement)
that contains a string of every length, contains infinitely many $x$ such that
    $$
d_K(x)\ge \logg|x|-O(\loggg|x|).
     $$

\section{Complexity of complexity can be high}\label{sec:gacs}

\begin{theorem}\label{th:main}
  There exist some constant $c$ such that for every $n$ there exists a string $x$ of length $n$ such that $\KS(\KS(x)|x) \geq \log n - c$.  
\end{theorem}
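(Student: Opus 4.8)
The plan is to recast the statement in the language of Kolmogorov-complexity descriptions and reduce it to a combinatorial game in the style of Andrej Muchnik, then to exhibit a computable winning strategy. For the reformulation, fix a constant $c$ (to be chosen large enough) and an $n$, and set $k=\lceil\log n\rceil-c$. For a string $x$ put $A_x=\{v:\KS(v\mid x)<k\}$: this set is enumerable uniformly in $x$, has fewer than $2^k$ elements (about $n/2^c$ of them), and, since $\KS(x)\le n+O(1)$, the value $\KS(x)$ always lies in a pool of roughly $n$ candidates. The inequality $\KS(\KS(x)\mid x)\ge\log n-c$ is exactly the assertion $\KS(x)\notin A_x$, so it suffices to construct, for each $n$, a string $x$ of length $n$ whose plain complexity avoids the small enumerable set $A_x$.

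\emph{The game and the translation.} Following Kalinina, I would let Builder play against Adversary. Builder may reserve strings of length $n$ and issue requests ``$\KS(x)\le\ell$'', subject to the counting bound that governs plain complexity (for each $\ell$, at most $2^\ell$ reserved strings may carry a request of value $\le\ell$); Adversary may announce improved upper bounds ``$\KS(x)<\ell$'' for reserved strings, subject to the global bound that fewer than $2^\ell$ strings ever get complexity below $\ell$, and may feed values into the various $A_x$, subject to $|A_x|<2^k$. Builder wins a play if, in the limit, some reserved $x$ ends up with $\KS(x)$ equal to its last requested value and that value lies outside $A_x$. The usual translation lemma for such games then yields the theorem: a winning strategy for Builder that is computable uniformly in $n$ turns Builder's requests into a single description machine (so the requested bounds on $\KS$ hold within $O(1)$), while the actual enumeration of short programs — both those witnessing small values of $\KS(v\mid x)$ and those witnessing small values of $\KS(x)$ — realises a legal play of Adversary; a win for Builder is precisely a length-$n$ string $x$ with $\KS(\KS(x)\mid x)\ge\log n-c$.

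\emph{Builder's strategy and the main obstacle.} Builder works with a set $T$ of more than $2^k$ candidate target complexities, located in a window below $n$ chosen low enough that the length-$n$ strings of complexity below $\min T$ are too few to exhaust Builder's supply of fresh strings, yet high enough that the counting budget for the corresponding requests is ample. It then repeatedly reserves a fresh string, gives it a target from $T$ that currently avoids its $A_x$, requests that bound, and discards and replaces the string as soon as Adversary drives the string's complexity below its target or inserts the target into $A_x$. At every moment Builder can still find a fresh string of complexity above $\min T$ and a target in $T$ outside the current $A_x$ (as $|T|$ exceeds every $|A_x|$), so Builder is never stuck; what has to be proved is that Adversary cannot keep spoiling forever. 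This is the delicate point: a priori Adversary seems to have unlimited room, since each $A_x$ carries its own budget of $2^k$ slots, so the argument must couple the spoiling moves to the genuinely bounded resources — the fewer than $2^k$ short programs underlying all the $A_x$ together, and the global bound on how many strings can be given low complexity — and extract from this a bound on the number of rounds. Making this accounting tight enough to lose only an additive constant, rather than the $\logg n$ that comes out of Kalinina's analysis of essentially the same game, is where the real work lies; once the strategy is pinned down, its computability uniformly in $n$, needed for the translation lemma, is routine.
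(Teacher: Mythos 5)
Your reformulation ($A_x=\{v:\KS(v\mid x)<k\}$; find $x$ of length $n$ with $\KS(x)\notin A_x$) and your game setup are essentially the paper's (White/Black with a budget of $2^\ell$ requests per level and a per-string budget on $A_x$), but the write-up stops exactly where the theorem gets proved: you yourself say that showing ``Adversary cannot keep spoiling forever'' is ``where the real work lies,'' and that accounting is never supplied. Worse, the strategy you do sketch would fail. If Builder discards the reserved string and moves to a fresh one as soon as Adversary inserts the current target into $A_x$, then each spoiling move costs Adversary only one element of the \emph{current} string's set $A_x$ --- a budget that is refreshed with every new string, since each $A_x$ is bounded by $2^k$ separately. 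The ``globally bounded resource'' you hope to couple this to does not exist: the fewer than $2^k$ conditional programs of length $<k$ each contribute one element to \emph{every} $A_x$, so they can spoil once per reserved string indefinitely, while Builder's abandoned requests drain her genuinely global budget of $2^\ell$ requests per level.

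The missing idea, which is the heart of the paper's proof, is that Builder must \emph{not} abandon the string when the target lands in $A_x$: she lowers the target by one row and re-requests, staying with the same $x$. With $k=\log n-1$ one has $|A_x|<n/2$ while there are $n$ possible targets, so within a single string Adversary can use the cheap per-string mechanism at most $n/2$ times before he must use the expensive one --- exhibiting a genuine short \emph{unconditional} program for $x$ --- and only then does Builder move to a fresh string. Every abandoned string is thereby charged to a distinct short program strictly shorter than its last request, and programs of length $<\ell$ number fewer than $2^\ell$; this is what keeps Builder's dead requests at level $\ell$ below $2^\ell$ and shows Adversary runs out ($2^n-1$ programs against $2^n$ strings). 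It also keeps Builder in the top half of the rows, which is exactly what makes the budgets still sum to $O(2^i)$ per level when all lengths $n$ are served by one machine (at most one live request per board, and only boards with $n\le 2i+O(1)$ can have one at level $i$) --- a uniformization step your proposal also passes over. Without the descend-within-the-column device and the charging of dead requests to Adversary's unconditional programs, the argument does not go through.
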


To prove this theorem, we first define some game and show a winning strategy for the game. (The connection between the game and the statement that we want to prove will be explained later.)

\subsection{The game}

Game $G_n$ has parameter $n$ and is played on a rectangular board divided into cells. The board has $2^n$ columns and $n$ rows numbered $0,1,\ldots,n-1$ (the bottom row has number $0$, the next one has number $1$ and so on, the top row has number $n-1$), see Fig.~\ref{fig:board}.

Initially the board is empty. Two players: White and Black, alternate their moves. At each move,
a player can pass or place a token (of his color) on the board.  The token can not be moved or removed
afterwards.  Also Black may blacken some cell instead. Let us agree that White starts the game
(though it does not matter).

The position of the game should satisfy some restrictions; the player who violates these restrictions, loses the game immediately. Formally the game is infinite, but since the number of (non-trivial) moves is a priori bounded, it can be considered as finite, and the winner is determined by the last (limit) position on the board.

\emph{Restrictions}: (1)~each player may put at most $2^i$ tokens in row $i$ (thus the total number of black and white tokens in a row can be at most $2^i + 2^i$); (2)~in each column Black may blacken at most half of the cells.

We say that a white token is \emph{dead} if either it is on a blackened cell or has a black token in the same column strictly below it.

\emph{Winning rule}: Black wins if he killed all white tokens, i.e., if each white token is dead in the final position.

\begin{figure}[h]
\begin{center}
\includegraphics{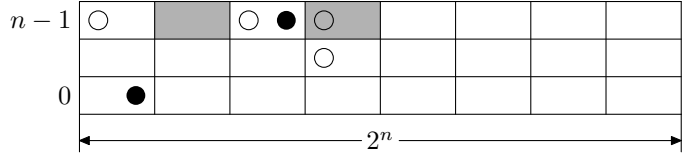}
\end{center}
\caption{Game board}\label{compcomp-1.mps}
\label{fig:board}
\end{figure}

For example, if the game ends in the position shown at Fig.~\ref{compcomp-1.mps}, the restrictions are
not violated (there are $3 \leq 2^2$ white tokens in row $2$ and $1\le 2^1$ white token in row $1$, as
well as $1 \leq 2^2$ black token in row $2$ and $1\le 2^0$ black token in row~$0$). Black loses
because the white token in the third column is not dead: it has no black token below and the cell is
not blackened. (There is also one living token in the fourth column.)

\subsection{How White can win}

The strategy is quite simple. White starts by placing a white token in an upper row of some column
and waits until Black kills it, i.e., blackens the cell or places a black token below. In the first
case White puts a token directly below it, and waits again. Since Black has no right to make all cells in a column black (at most half may be blackened), at some point he will be forced to place a black token below the white token in this column. After that White switches to some other column. (The ordering of columns is not important; we may assume that White moves from left to right.)

Note that when White switches to a next column, it may happen that there is a black token in this column or some cells are already blackened.  If there is already a black token, White switches again to the next column; if some cell is blackened, White puts her token in the topmost white (non-blackened) cell. 

This strategy allows White to win. Indeed, Black cannot place his tokens in all the columns due to the restrictions (the total number of his tokens is $\sum_{i=0}^{n-1} 2^i = 2^n -1$, which is less than the number of columns). White also cannot violate the restriction for the number of her tokens on some row $i$: all dead tokens have a black token strictly below them, so the number of them on row $i$ is at most $\sum_{j=0}^{i-1} 2^j = 2^i - 1$, hence White can put an additional token.

In fact we may even allow Black to blacken all the cells except one in each column, and White will still win, but this is not needed (and the $n/2$ restriction will be convenient later).

\subsection{Proof of G\'acs' theorem}

Let us show that for each $n$ there exists a string $x$ of length $n$ such that $\KS(\KS(x|n)|x)\ge
\log n -O(1)$.  Note that here $\KS(x|n)$ is used instead of $\KS(x)$; the difference between these
two numbers is $O(\log n)$ since $n$ can be described by $\log n$ bits, so the difference between
the complexities of these two numbers is $O(\logg n)$.

Consider the following strategy for Black (assuming that the columns of the table are indexed by
strings of length $n$):

\begin{itemize}
\item Black blackens the cell in column $x$ and row $i$ as soon as he discovers that $\KS(i|x)<\log
  n-1$. (The constant $1$ guarantees that less than half of the cells will be blackened.) Note that
  Kolmogorov complexity is an upper semicomputable function, and Black approximates it from above,
  so more and more cells are blackened.

\item Black puts a black token in a cell $(x,i)$ when he finds a program of length $i$ that produces
  $x$ with input $n$ (this implies that $\KS(x|n) \leq i$).  Note that there are at most $2^i$
  programs of length $i$, so Black does not violate the restriction for the number of tokens on any
  row $i$.  
\end{itemize}

Let White play against this strategy (using the strategy described above). Since the strategy is
computable, the behavior of White is also computable. One can construct a decompressor
$V$ for the
strings of length $n$ as follows: each time White puts a token in a cell $(x,i)$, a program of length
$i$ is assigned to~$x$. By White's restriction, no more than $2^i$ programs need to be assigned. By
universality, a white token on cell $(x,i)$ implies that $\KS(x|n) \leq i + O(1)$. If White's token is
alive in $(x,i)$, there is no black token below, so $\KS(x|n)\ge i$, and therefore
$\KS(x|n)=i+O(1)$. Moreover, for a living token, the cell $(x,i)$ is not blackened, so $\KS(i|x)\ge
\log n -1$. Therefore, $\KS(\KS(x|n)|x)\ge \log n-O(1)$.

\textbf{Remark}: the construction also guarantees that $\KS(x|n)\ge n/2-O(1)$ for that~$x$. 
(Here the factor $1/2$ can be replaced by any $\alpha<1$ if we change the rules of the game accordingly.)
Indeed, according to White's strategy, he always plays in the highest non-black cell of some column, 
and at most half of the cells in a column can be blackened, therefore no white tokens appear in the
lower half of the board.

\subsection{Modified game and 
proof of Theorem~\protect{\ref{th:main}}}

Now we need to get rid of the condition $n$ and show that for every $n$ there is some $x$ such that
$\KS(\KS(x)|x)\ge \log n - O(1)$. Imagine that White and Black play simultaneously all the games
$G_n$. Black blackens the cell $(x,i)$ in game $G_{|x|}$ when he discovers that $\KS(i|x)<\log |x| - 1$,
as he did before, and puts a black token in a cell $(x,i)$ when he discovers an \emph{unconditional}
program of length $i$ for $x$. If Black uses this strategy, he satisfies the stronger restriction:
the total number of tokens in row $i$ \emph{on all boards} is bounded by $2^i$. 

Assume that White uses the described strategy on each board. What can be said about the total number
of white tokens in row $i$? The dead tokens have black tokens strictly below them and hence the total
number of them does not exceed $2^i - 1$. On the other hand, there is at most one living white token on
each board. We know also that in $G_n$ white tokens never appear below row $n/2-1$, so the number of
alive white tokens does not exceed $2i + O(1)$. Therefore we have $O(2^i)$ white tokens on the $i$-th row
in total.

For each $n$ there is a cell $(x,i)$ in $G_n$ where White wins in $G_n$. Hence, $\KS(x)<i+O(1)$
(because of the  
property just mentioned and the computability of White's behavior), $\KS(x)\ge i$ and
$\KS(i|x)\ge \log n-1$ (by construction of Black's strategies and the winning condition).
Theorem~\ref{th:main} is proven.

\subsection{Version for prefix complexity}

\begin{theorem}\label{th:prefix-max}
There exist some constant $c$ such that for every $n$ there exists a string $x$ of length $n$ such
that $\KS(\KP(x)|x) \geq \log n-c$ and $\KP(x) \geq n/2-c$. This also implies that $\KP(\KP(x)|x) = \log n+O(1)$.
\end{theorem}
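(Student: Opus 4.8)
The plan is to rerun the proof of Theorem~\ref{th:main} with prefix-free programs in place of plain programs. I would keep the board of the game $G_n$ (columns indexed by the strings $x$ of length $n$, about $n$ rows — a bit more, so that all values of $\KP(x)$ fit), and Black would still blacken the cell $(x,i)$ as soon as he learns that $\KS(i\,|\,x)<\log n-1$ (so fewer than half the cells of a column are ever blackened); but now Black puts a pawn on $(x,i)$ when he discovers a \emph{prefix-free} program of length $i$ for $x$, and the restriction on the \emph{number} of pawns is replaced by the Kraft bound: the weights $2^{-i}$ of Black's pawns sum to at most $1$ — exactly what the universal prefix-free machine guarantees — and likewise for White's pawns, up to a constant factor. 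One essential change to White's play is then forced. When a black pawn appears below her pawn, White does \emph{not} switch columns (as in the plain case) but moves her pawn down, keeping it just above Black's lowest pawn in that column; otherwise Black could kill White in \emph{every} column with one cheap high pawn apiece, at negligible total Kraft weight, and win. White will, however, never move below row $n/2$: if Black's pawn drops below $n/2$ she abandons that column and starts afresh in another one.

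There are then three things to check. \emph{White wins:} to make White abandon a column Black must push her below row $n/2$, which needs a black pawn below row $n/2$ there (blackening alone cannot cover half a column), hence costs him Kraft weight at least $2^{-n/2}$; with total weight $1$ he can force White out of only $O(2^{n/2})$ of the $2^{n}$ columns, so White ends with a living pawn on some cell $(x,r)$ with $r\ge n/2$. \emph{The winning position gives the statement:} as in Section~\ref{sec:gacs}, $(x,r)$ is not blackened, so $\KS(r\,|\,x)\ge\log n-1$; White's pawn-placements have summable weights, so they define a prefix-free decompressor and $\KP(x)\le r+O(1)$; and since no black pawn lies strictly below $(x,r)$ while Black eventually finds a shortest prefix-free program for $x$, also $\KP(x)\ge r$. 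Hence $r=\KP(x)+O(1)$, so $\KS(\KP(x)\,|\,x)\ge\KS(r\,|\,x)-O(1)\ge\log n-O(1)$, and $\KP(x)\ge r\ge n/2$. \emph{Removing the condition $n$:} as in the proof of Theorem~\ref{th:main}, play all the $G_n$ simultaneously and let Black use unconditional prefix-free programs, so that a single Kraft inequality governs all of his pawns on all boards at once; the gain is that White's combined decompressor is still prefix-free — the pawns she plays in any column she abandons all sit at rows $\ge n/2$, so their weight is $O(2^{-n/2})$, and, writing $b_n$ for the Kraft weight Black spends on board $G_n$ (so $\sum_n b_n\le 1$), Black can force White out of only $O(2^{n/2}b_n)$ columns on $G_n$; hence White's total weight over all boards is $O(\sum_n b_n)+O(\sum_n 2^{-n/2})=O(1)$. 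This produces, for every $n$, a string $x$ of length $n$ with $\KP(x)\ge n/2$ and $\KS(\KP(x)\,|\,x)\ge\log n-O(1)$. The last assertion of the theorem follows at once, since $\KP(y\mid z)\ge\KS(y\mid z)-O(1)$ and so $\KP(\KP(x)\mid x)\ge\KS(\KP(x)\mid x)-O(1)\ge\log n-O(1)$.

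The point needing the most care is keeping White's decompressor prefix-free, which is exactly where the plain argument does not transfer: there White is allowed $O(2^{i})$ pawns in row $i$ over all boards, harmless for a plain machine but making $\sum_i 2^{-i}$ diverge for a prefix-free one. This is what forces the two modifications above — chasing Black's pawns downward instead of switching columns, and exploiting Black's single global Kraft bound in the parallel game. Verifying that White's modified strategy still wins every $G_n$ and that her combined play has globally bounded Kraft weight is the heart of the proof; the remaining bookkeeping is identical to Section~\ref{sec:gacs}.
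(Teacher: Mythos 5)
Your overall framework (same board, Black's pawns now governed by the Kraft inequality, the conclusion read off a living white pawn) matches the paper, and you correctly identify the crux: White's pawn placements must themselves satisfy a Kraft bound so that they define a prefix-free decompressor. But the modification of White's strategy that you declare ``forced'' is both unnecessary and, as stated, broken. It is unnecessary because your premise is a miscount: a black pawn kills a white pawn only if it lies \emph{strictly below} it, and White always sits in the topmost non-blackened cell of a fresh column, so each column Black blocks or kills costs him at least $2^{-(n-1)}$; covering all $2^n$ columns would cost at least $2$, violating his Kraft budget. So the unmodified strategy of Section~\ref{sec:gacs} (switch columns when a black pawn appears below) still wins, and that is exactly what the paper does. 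The weight accounting then works out because in every abandoned column all of White's pawns lie strictly above the black pawn, at row $i$ say, that triggered the switch, so their weights sum to $\sum_{j>i}2^{-j}<2^{-i}$ and are charged to that black pawn; summed over all columns and all boards this is at most Black's total weight $\le 1$, while the single non-abandoned column per board contributes $O(2^{-n/2})$, giving $O(1)$ in total --- the same charging idea you use, but without changing the strategy.

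The modification itself is the genuine gap. With the paper's conventions (row $0$ at the bottom, a white pawn is dead iff a black pawn sits strictly below it), a white pawn kept ``just above Black's lowest pawn'' \emph{has} a black pawn strictly below it and is therefore permanently dead; this directly contradicts your later step ``no black pawn lies strictly below $(x,r)$'', which is what you need both for $\KP(x)\ge r$ and for the cell $(x,r)$ to be a winning (non-blackened, living) position. Worse, since your White abandons a column only when Black's pawn drops below row $n/2$, Black can place one pawn at row $n-2$ in White's current column, at cost $2^{-(n-2)}$, and then pass forever: White stays in that column with a dead pawn and never wins anywhere. Even under the charitable reading in which White descends to a row at or below Black's lowest pawn (so that she stays alive), you would be proving a different strategy correct from the one you analyzed, and the necessity claim that motivates it would still be false. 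The fix is simply to keep White's original strategy and run the charging argument above.
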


The proof of $\KS(\KP(x)|x) \geq \log n - c$ goes in the same way. Black places a token in cell
$(x,i)$ if some program of length $i$ for a prefix-free (unconditional) machine computes $x$ (and
hence $\KP(x) \leq i$) and blackens the cell if $K(i|x) < n-1$; White uses the same strategy as described above. The sum of $2^{-i}$ for all
black tokens is less than $1$ (Kraft inequality); some white tokens are dead, i.e., strictly above black ones, and for
each column the sum of $2^{-j}$ over these tokens $(x,j)$, does not exceed $\sum_{j>i}^n 2^{-j} <
2^{-i}$. Hence the corresponding sum for all dead white tokens is less than $1$; for the rest the sum
is bounded by $\sum_n 2^{-n/2 + 1}$, so the total sum is bounded by a constant, and we conclude that for
the token in the winning column $x$ the row number is $\KP(x)+O(1)$, and this cell is not blackened.

It remains to note that $\KP(\KP(x)|x)$ is greater than $\KS(\KP(x)|x)\ge \log n - O(1)$ for $x$ of
length $n$; on the other hand,  $n/2 \le \KP(x)\le 2n+O(1)$, so the length of $\KP(x)$ (in binary) 
is $\log n + O(1)$, and the conditional prefix complexity of a string given its length is bounded by
the length, hence $\KP(\KP(x)|x) \le \KP(\KP(x)|\log n) + O(1) \le \log n + O(1)$.

\medskip
\textbf{Remark}:
 In fact, $\KP(\KP(x)|x) \le \log |x| + O(1)$ for all $x$ (this will be useful in the next section). In general,
 if $z \le O(n)$, then $\KP(z|\log n)\le \log n+O(1)$, because we may add leading zeros to the
 binary representation of $z$ up to length $\log n+O(1)$, and the prefix complexity of a string given its
 length does not exceed the length. (Note that for $z=\KP(x)$ and $n=|x|$ we have $z \le O(n)$, and
 $\KP(\KP(x)|x)\le \KP(\KP(x)|n)\le \KP(\KP(x)|\log n)$.)

\section{Strings with maximal plain and prefix complexity}
  \label{sec:solovay}

In this section we provide a new proof and a generalization for Solovay's result
mentioned in the introduction. For completeness we first reproduce a proof of the simple upper bound for $d_C(x)$ in terms of $d_K(x)$. 

\begin{theorem}[Solovay \cite{Solovay}]
  \label{th:KrandImpliesCrand} $$d_C(x)\le O(d_K(x)).$$
\end{theorem}
\begin{proof}
Assume that $d_C(x)$ is large for some $x$ of length $n$; we need to show that $d_K(x)$ is almost as
large. Let $d_C(x)$ be equal to some $c$, and $p$ be a plain program for $x$. Let $\hat c$ and
$\hat{n}$ be the self-delimiting programs for $c$ and $n$ of length $O(\log c)$ and $\KP(n)$. Then
$\hat{c}\hat{n}p$ is a self-delimiting program for $x$ that gives prefix deficiency $c-O(\log c)$.
\end{proof}

As we have mentioned, the reverse statement is not true: $d_K(x)$ can be big even if $d_C(x)$ is small. However, there exists an upper bound for $d_K$ in terms of $d_C$ and other complexities:

\begin{theorem}\label{th:solovayOptimal}
 For any $x$ of length $n$ 
 $$
  d_K(x) \leq O(d_C(x))+\KP(\KP(n) |n).
 $$
\end{theorem}
     
Note that $\KP(\KP(n)|n)\le \logg n+O(1)$ (see the remark that ends the previous section), so this
bound implies the bound from~\cite{Solovay} mentioned in the introduction). 

\begin{proof} 
Let us denote $d_C(x)$ by $c$. As Levin noted, $\KS(x)=\KP(x|\KS(x))+O(1)$~\cite{levinCK} (see
also~\cite[p.~203]{LiVitanyi}). So with $O(c)$-precision we have
$$n = \KS(x) = \KP(x|\KS(x)) = \KP(x|n).$$
 Now we apply 
$$
    \KP(u,v) = \KP(u) + \KP(v|u,\KP(u))+O(1).
$$
(additivity for prefix complexity, see, e.g.,~\cite{complexityOfComplexity,GacsNotes,LiVitanyi}) for $u=n$, $v=x$: 
  $$
\KP(x) = \KP(n,x) = \KP(n) + \KP(x|n,\KP(n)),
  $$
all with $O(1)$-precision. Combining these two observations, we get
 \begin{align}
   d_K(x)&=n + \KP(n) - \KP(x)=\nonumber
   \\ &= (\KP(x|n) + O(c))+ \KP(n) - (\KP(n) + \KP(x|n,K(n)))= \nonumber \\
   &= \KP(x|n) - \KP(x|n,\KP(n)) + O(c).\label{eq:gap}
 \end{align}
 It is easy to see that $\KP(x|n) \leq \KP(x|n,\KP(n)) + \KP(\KP(n)|n)+O(1)$, so $d_K(x)$ is 
 bounded by $\KP(\KP(n)|n)+O(c)$.
\end{proof}

\textbf{Remark}: With essentially the same proofs, we can replace terms $O(d_K(x))$ and $O(d_C(x))$ by 
$d_K(x) + O(\log d_K(x))$ and $d_C(x) + O(\log d_C(x))$ in Theorems~\ref{th:KrandImpliesCrand} and~\ref{th:solovayOptimal}.

\bigskip
The following theorem shows that \emph{for all $n$} the second term in the bound of Theorem~\ref{th:solovayOptimal} is unavoidable (up to $O(\log \KP(\KP(n)|n))$ precision).

\begin{theorem} \label{th:solovay2allN}
  For some $c$ and all $n$ there exists a string $x$ of length $n$ such that $d_C(x) \leq c$, and
  \[
  d_K(x) \geq \KP(\KP(n)|n) - 3\KP(\KP(\KP(n)|n)|n) - c.
  \]
\end{theorem}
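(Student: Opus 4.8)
The plan is to convert the $\KP$-deficiency into a difference of conditional complexities, exactly as in the proof of Proposition~\ref{prop:solovayOptimal}, and then to construct an explicit $x$. Write $m=\KP(n)$, $k=\KP(m|n)$ and $j=\KP(k|n)=\KP\bigl(\KP(\KP(n)|n)\,\big|\,n\bigr)$. If $x$ has length $n$ and $\KS(x)\ge n-O(1)$, then $\KP(x|n)=n+O(1)$, and by equation~\eqref{eq:gap} the $\KP$-deficiency of $x$ equals $\KP(x|n)-\KP(x|n,\KP(n))+O(1)$; so it suffices to produce a string $x$ of length $n$ with $\KS(x)\ge n-O(1)$ and $\KP\bigl(x\,\big|\,n,\KP(n)\bigr)\le n-k+3j+O(1)$. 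Put differently, I need a $\KS$-random string of length $n$ on which merely putting $\KP(n)$ into the condition already saves about $\KP(\KP(n)|n)$ bits.

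The object that makes this possible is a shortest program $m^{\star}$ for $m=\KP(n)$ given $n$. It has length $k$, and since both $m$ and $k=|m^{\star}|$ are recoverable from $(m^{\star},n)$ we have $\KP(m^{\star}|n)=k+O(1)$: embedding $m^{\star}$ into $x$ thus contributes its full $k$ bits of complexity relative to $n$ and does not damage $\KS$-randomness. On the other hand, given $n$ and $m=\KP(n)$ one runs all programs on input $n$ until some length-$m$ program for $n$ halts; that first halting time $T$ is computable from $(n,m)$, and from $(n,m,k)$ one then dovetails all length-$k$ programs on input $n$ and outputs the first that prints $m$, recovering $m^{\star}$. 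Hence $\KP\bigl(m^{\star}\,\big|\,n,\KP(n)\bigr)\le\KP\bigl(k\,\big|\,n,\KP(n)\bigr)+O(1)\le j+O(1)$, so conditioning on $\KP(n)$ collapses this $k$-bit chunk to about $j$ bits.

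Accordingly I would take $x$ to be a length-$n$ string whose last $k$ bits are $m^{\star}$ and whose first $n-k$ bits are incompressible relative to $(n,m^{\star},\KP(m^{\star}|n))$. On one side, $x$ can be reconstructed from $(n,\KP(n))$ out of the $n-k$ incompressible bits together with the recovery of $m^{\star}$ and of the split position $n-k$; the latter two cost $O(j)$ in total, which gives $\KP(x|n,\KP(n))\le n-k+3j+O(1)$ as wanted. On the other side, one needs the $n-k$ incompressible bits and the $k+O(1)$ bits of complexity of $m^{\star}$ given $n$ to combine, by additivity of $\KP$, to complexity $\ge n-O(1)$ given $n$, that is $\KS(x)\ge n-O(1)$. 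Feeding the resulting theorem back into Theorem~\ref{th:main} — choosing $n$ with $\KP(\KP(n)|n)=\logg n+O(1)$, hence $\KP(\KP(\KP(n)|n)|n)=O(\loggg n)$ — then recovers Solovay's theorem with the $c\,\loggg|x|$ term replaced by a constant.

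The step I expect to be the main obstacle is precisely keeping the $\KS$-deficiency an absolute constant. The naive concatenation $x=s\,m^{\star}$ only yields $\KS(x)\ge n-O(j)$, because recovering the split point $n-k$ from $(x,n)$ costs $\KP(k|n)=j$ bits, and this $j$ tends to leak into the $\KS$-deficiency instead of the $\KP$-deficiency. The construction has to be arranged so that every time one is forced to ``name'' $k=\KP(\KP(n)|n)$ the cost is charged to the $\KP(x|n,\KP(n))$ side — which is exactly what produces the factor $3$ in $3\KP(\KP(\KP(n)|n)|n)$ — while the $\KS$-deficiency remains bounded by a single constant.
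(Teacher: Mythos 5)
Your overall blueprint matches the paper's: reduce the $\KP$-deficiency to $\KP(x|n)-\KP(x|n,\KP(n))$ via equation~\eqref{eq:gap}, and build $x$ by concatenating a shortest program for $\KP(n)$ given $n$ with incompressible padding. But there is a genuine gap exactly at the point you flag yourself: you never actually obtain $n-\KS(x)\le c$ for an absolute constant. With $x=r\,m^{\star}$ and $m^{\star}$ a shortest \emph{prefix-free} program of length $k=\KP(\KP(n)|n)$, the lower bound $\KP(x|n)\ge\KP(m^{\star}|n)+\KP(r|\dots)-O(1)$ requires recovering the pair $(r,m^{\star})$ from $(x,n)$, i.e.\ the split point, and as you note this costs about $\KP(k|n)=j$ bits, leaving you with $\KS(x)\ge n-O(j)$ only. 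Saying that ``the construction has to be arranged so that the cost is charged to the $\KP(x|n,\KP(n))$ side'' names the difficulty but does not resolve it, and your guess that the factor $3$ records a triple charging of $j$ is not where that constant comes from.

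The paper's fix is to take $q$ to be a shortest \emph{plain} program for $\KP(n)$ given $n$ (tie-broken by running time, so that $q$ is computable from $(n,\KP(n),|q|)$) and to set $x=qr$. The point of using a plain program is that $|q|=\KS(q|n)+O(1)$, so the conditional form of Levin's theorem gives $\KP(q|n,|q|)=|q|+O(1)$: the split point $|q|$ may be inserted into the condition \emph{for free} on the lower-bound side, since $\KP(x|n)\ge\KP(x|n,|q|)=\KP(q|n,|q|)+\KP(r|q,|q|,n)=|q|+(n-|q|)=n+O(1)$, whence $\KS(x)=n+O(1)$ with an absolute constant. On the upper-bound side one pays $\KP(|q|\,|n)$ once, getting deficiency at least $|q|-\KP(|q|\,|n)-O(1)=\KS(\KP(n)|n)-\KP(\;\KS(\KP(n)|n)\;|n)-O(1)$; the factor $3$ then arises from converting this plain-conditional quantity into prefix terms via the conditional version of Lemma~\ref{lem:plainPrefixHelp}, i.e.\ from Solovay's formula $\KP(a)-\KS(a)=\KP(\KP(a))+O(\KP(\KP(\KP(a))))$, not from the construction of $x$ itself. (Incidentally, if you insist on a prefix-free $m^{\star}$, placing it as a \emph{prefix} rather than a suffix of $x$ would let you locate the split point from $(x,n)$ alone, since the domain of the conditional prefix-free machine is an antichain and hence meets the chain of prefixes of $x$ in exactly one element; that is one concrete way to close the gap you identified, but it is not the route the paper takes.)
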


As we have said in the introduction, we can combine this result with Theorem~\ref{th:prefix-max} to obtain Solovay's result as corollary, even without $\loggg$-term:

\begin{corollary}\label{cor:solovay2}
  There exists a constant $c$ and infinitely many $x$ such that $d_C(x) \le c$ and 
  $d_K(x)\ge \log^{(2)} |x|-c$.  
\end{corollary}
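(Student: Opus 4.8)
The plan is to derive Corollary~\ref{cor:solovay2} from Theorem~\ref{th:solovay2allN} by specializing $n$. The bound in Theorem~\ref{th:solovay2allN} has two error terms, namely $3\KP(\KP(\KP(n)|n)|n)$ and the additive constant $c$; the additive constant is harmless, so it suffices to exhibit infinitely many lengths $n$ for which the triple-complexity term is $O(1)$ and, simultaneously, $\KP(\KP(n)|n) \ge \logg n - O(1)$. For such $n$, Theorem~\ref{th:solovay2allN} hands us a string $x$ of length $n$ with $n - \KS(x) \le c$ and $n + \KP(n) - \KP(x) \ge \logg n - O(1) = \logg |x| - O(1)$, which is exactly the assertion of the corollary.

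So the real task is to produce these good lengths. Recall that $\KP(\KP(n)|n) \le \logg n + O(1)$ for all $n$ (noted just after Proposition~\ref{prop:solovayOptimal}). For the matching lower bound I would invoke the prefix-free version of the improved Gács theorem, Theorem~\ref{th:prefix-max}: for every $m$ it yields a string $s$ of length $m$ with $\KP(\KP(s)|s) \ge \log m - O(1)$. Identifying $s$ with the natural number it codes, we have $\log s = m + O(1)$, hence $\logg s = \log m + O(1)$, and therefore $\KP(\KP(s)|s) \ge \logg s - O(1)$. Combined with the universal upper bound this gives $\KP(\KP(s)|s) = \logg s + O(1)$ for this infinite family of numbers $n := s$.

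Finally, for such $n$ the triple term collapses: since $\logg n$ is computable from $n$ and $\KP(\KP(n)|n)$ differs from $\logg n$ by at most a fixed constant, one of finitely many hard-wired offsets recovers $\KP(\KP(n)|n)$ from $n$, so $\KP(\KP(\KP(n)|n)|n) = O(1)$. Plugging this into Theorem~\ref{th:solovay2allN} and absorbing constants finishes the argument, and since the family of admissible $n$ is infinite, so is the resulting family of witnesses $x$. The only genuinely non-trivial step is the middle one — finding lengths $n$ on which $\KP(\KP(n)|n)$ is nearly maximal — and this is exactly where removing the $\logg n$ term in Theorems~\ref{th:main} and~\ref{th:prefix-max} pays off: with Gács's original bound $\log m - \logg m - O(1)$ one would only obtain $\KP(\KP(n)|n) = \logg n - O(\loggg n)$, and the corollary would come out with the weaker error term $c\loggg |x|$ of Solovay's original statement.
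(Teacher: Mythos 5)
Your derivation is correct, and it is essentially the route the paper itself gestures at in the paragraph preceding the corollary (which is why the statement is labelled a corollary of Theorem~\ref{th:solovay2allN}): pick $n$ with $\KP(\KP(n)|n)=\logg n+O(1)$ via Theorem~\ref{th:prefix-max}, note that the triple term $\KP(\KP(\KP(n)|n)\,|\,n)$ collapses to $O(1)$ because $\KP(\KP(n)|n)$ is then a bounded offset from the computable quantity $\logg n$, and specialize. There is no circularity, since the paper proves Theorem~\ref{th:solovay2allN} independently of the corollary. However, the paper's displayed proof is different: it is a direct, self-contained construction that does not pass through Theorem~\ref{th:solovay2allN}. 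For the special lengths $n$ it takes $x=\la \KP(n)\ra r$, where $r$ is chosen with $\KP(r|n,\KP(n))\ge|r|$ and $|r|=n-\logg n$, and then verifies via equation~\eqref{eq:gap} and Levin's additivity that $\KP(x|n)=n$ while $\KP(x|n,\KP(n))\le n-\logg n$. What the direct proof buys is independence from the considerably more technical machinery behind Theorem~\ref{th:solovay2allN} (the shortest plain program $q$ for $\KP(n)$ given $n$, and Lemma~\ref{lem:plainPrefixHelp}, which rests on Solovay's estimate relating $\KP(a)-\KS(a)$ to $\KP(\KP(a))$); it also serves as a warm-up whose generalization (replacing $\la \KP(n)\ra$ by $q$) yields the theorem. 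What your route buys is brevity, at the cost of invoking the full theorem. Your closing observation about why the improved Gács bound is needed to avoid the $c\loggg|x|$ loss is accurate and matches the paper's own discussion.
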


Before proving Theorem \ref{th:solovay2allN}, we prove the corollary directly.
\begin{proof}
  First we choose $n$,  the length of string $x$, in such a way that
  $$\KP(\KP(n)|n)=\logg n +O(1)$$  
  and $\KP(n)\ge (\log n)/2 - O(1)$ (Theorem~\ref{th:prefix-max}).  We know already from
  equation~\eqref{eq:gap} that for a string $x$ with $\KS$-deficiency $c$ the value of
  $\KP$-deficiency is $O(c)$-close to $\KP(x|n)-\KP(x|n,\KP(n))$.  In other words, adding $\KP(n)$ to the condition $n$ in $\KP(x|n)$ should decrease the complexity, so let us include $\KP(n)$ in $x$ somehow. We also
  have to guarantee maximal $\KS$-complexity of $x$. This motivates the following choice:
  
 \begin{itemize}
   \item choose $r$ of length $n - \logg n$ such that $\KP(r|n,\KP(n)) \geq |r|$. 
     Note that this implies $\KP(r|n,\KP(n)) = |r|+O(1)$, since the length of $r$ is determined by the condition.
   \item Let $x = \la K(n) \ra r$, the concatenation of $K(n)$ (in binary) with $r$.
     Note that $\la K(n) \ra$ has at most $\logg n + O(1)$ bits for every $n$, and by choice of $n$
     has at least $\logg n - O(1)$ bits, hence $|x| = n + O(1)$.
 \end{itemize}
 As we have seen (looking at equation~\eqref{eq:gap}), it is enough to show that 
 $$\KP(x| \KP(n),n) \le n - \logg n$$
  and $\KP(x|n)=n$ (the latter equality implies $\KS(x)=n$, as
 Levin has noted\footnote{We already mentioned the equation $\KS(x)=\KP(x|\KS(x))+O(1)$, so $\KS(x)$
 is a fixed point of the function $i\mapsto \KP(x|i)$ up to $O(1)$-precision. Since this function
 changes logarithmically slow compared to $i$, the reverse statement is also true: if $\KP(x|i)=i$
 with some precision $d$, then $i = \KS(x)$ with $O(d)$-precision.}); all the equalities here
 and below are up to $O(1)$ additive term.
 \begin{itemize}
   \item 
     Knowing $n$, we can split $x$ in two parts $\langle \KP(n)\rangle$ and $r$. 
     Hence, $\KP(x|\KP(n),n) = \KP(\KP(n),r|n,\KP(n))$, and this equals $\KP(r|n,\KP(n))$, i.e.,  
     $n - \logg n$ by choice of~$r$. 
   \item 
    To compute $\KP(x|n)$, we use additivity:
     \begin{equation*}
      \KP(x | n) = \KP(\KP(n),r|n) = \KP(\KP(n)|n) + \KP(r | \KP(n), \KP(\KP(n)|n), n). 
     \end{equation*}
     By choice of $n$, we have $\KP(\KP(n)|n)=\logg n$, and the last term simplifies to $\KP(r |
     \KP(n), \logg n, n)$, and this equals $\KP(r | \KP(n), n) = n - \logg n$ by choice of~$r$.
     Hence $\KP(x|n) = \logg n + (n - \logg n) = n$.
 \end{itemize}
\end{proof}

\textbf{Remark:}
 One can ask how many strings are suitable for Corollary \ref{cor:solovay2}.
By  Theorem~\ref{th:solovayOptimal}, the length $n$ of such a string must satisfy 
$\KP(\KP(n)|n) \geq \logg n - O(1)$.  
By Theorem~\ref{th:prefix-max}, there is at least one such $n$ for every $|n|$ (length 
of $n$ as a binary string). Hence such $n$ can be found within exponential intervals. 

Then one can ask (for some $n$ with this property) how many strings $x$ of length $n$ are suitable for Corollary \ref{cor:solovay2}. 
By a theorem of Chaitin \cite{LiVitanyi}, there are at most $O(2^{n-k})$ strings of length~$n$ with
$\KP$-deficiency $k$, hence we can have at most $O(2^{n-\logg n})$ such strings. It turns out
that at least a constant fraction of them is suitable for Corollary \ref{cor:solovay2}.
To show this, note that in the proof 
every different $r$ of length $|n|-\logg n + O(1)$ 
leads to a different $x$. For $r$ we need $\KP(r|n,\KP(n)) \geq
|r|-O(1)$, and  there are $O(2^{n - \logg n})$ such~$r$.

\medskip
The corollary is proved, and we proceed to the 
\smallskip

\begin{proof}[Proof of Theorem \ref{th:solovay2allN}]
  In the proof above, in order to obtain a large value $\KP(x|n)-\KP(x|n,\KP(n))$, 
  we incorporated $\KP(n)$ directly in $x$ (as $\la \KP(n) \ra$). 
  To show that $C(x) = K(x|n)+O(1)$ is large, we used that the length of
  $\la \KP(n) \ra$ equals $\KP(\KP(n)|n) + O(1)$. For arbitrary $n$
  this trick does not work, 
  but we can use a shortest program for $\KP(n)$ given $n$ (on a plain machine) instead of $\la\KP(n)\ra$. 
  For every $n$, we construct $x$ as follows:
  \begin{itemize}
    \item  
      let $q$ be a shortest program that computes $\KP(n)$ from $n$ on a {\em plain} machine (if there are several shortest
      programs, we choose the one that appears first, so it can be reconstructed from $n$ and $\KP(n)$). 
      Note that $|q| = \KS(\KP(n)|n)= \KS(q|n) + O(1)$ (remember that a shortest program is always incompressible). By Levin's result (conditional version: $\KS(u|v)=\KP(u|v,\KS(u|v))$), the last term also equals
      $\KP(q|n,|q|)+O(1)$; 
    \item 
      let $r$ be a string of length $n - |q|$ such that $\KP(r|n,\KP(n),q) \geq |r|$.
      This implies $\KP(r|n,\KP(n),q) = |r|+O(1)$, since the length of $r$ is determined by the
      condition.
    \item 
      now we define $x$ as the concatenation $qr$.
  \end{itemize}
  
 Now the proof goes as follows. We have to prove two things (together they obviously imply the statement of Theorem~\ref{th:solovay2allN}):
 \begin{itemize} 
 \item that $\KS(x) = n + O(1)$,  and $d_K(x)\ge |q|-\KP(|q|\,|n) + O(1)$.
 \item that $\KP(\KP(n)|n) - 3\KP( \KP(\KP(n)|n) |n) \leq \KS(\KP(n)|n) - \KP(\KS(\KP(n)|n)|n) + O(1).$
  \end{itemize}
The second part is a special case of the following  
 \begin{lemma}\label{lem:plainPrefixHelp}
 $\KP(a|b) - 3\KP(\KP(a|b)|b) \leq \KS(a|b) - \KP(\KS(a|b)|b)+O(1)$
\end{lemma}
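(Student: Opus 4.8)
The plan is to rewrite the inequality in the equivalent form
\[
  \KP(a) - \KS(a) + \KP(\KS(a)) \;\leq\; 3\,\KP(\KP(a)) + O(1)
\]
and to bound its left-hand side entirely in terms of the single quantity $t := \KP(\KP(a))$. I would set $d := \KP(a) - \KS(a)$; then $d \geq -O(1)$ (a prefix-free program is a plain program), and by the standard estimate $\KP(a) \leq \KS(a) + \KP(\KS(a)) + O(1)$ — prepend to a shortest plain program for $a$ a shortest prefix-free code for its length, which also follows from the additivity property stated above — one has $d \leq \KP(\KS(a)) + O(1)$.

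Next I would record two elementary bounds. First, since $\KS(a) = \KP(a) - d$ is computed from the pair $\bigl(\KP(a),d\bigr)$, subadditivity of $\KP$ gives
\[
  \KP(\KS(a)) \;\leq\; \KP(\KP(a)) + \KP(d) + O(1) \;=\; t + \KP(d) + O(1).
\]
Second, $\KP(j) \leq 2\log(j+1) + O(1)$ for every natural number $j$. The core of the argument is then a self-referential estimate for $d$: chaining $d \leq \KP(\KS(a)) + O(1)$ with the two bounds above yields $d \leq t + 2\log(d+1) + O(1)$, and since the function $d \mapsto d - 2\log(d+1)$ tends to infinity this forces $d \leq t + 2\log(t+2) + O(1)$. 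Feeding this back into the two bounds — and using $\log\bigl(t + 2\log(t+2) + O(1)\bigr) = \log(t+2) + O(1)$ — gives the companion estimate $\KP(\KS(a)) \leq t + 2\log(t+2) + O(1)$.

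Adding the two displays,
\[
  \KP(a) - \KS(a) + \KP(\KS(a)) \;=\; d + \KP(\KS(a)) \;\leq\; 2t + 4\log(t+2) + O(1),
\]
and the proof closes by observing that $4\log(t+2) \leq t + O(1)$ for all $t \geq 0$ (the function $t \mapsto 4\log(t+2) - t$ is bounded), so the right-hand side is at most $3t + O(1) = 3\,\KP(\KP(a)) + O(1)$, which is what we want. Every step relativizes verbatim, so the same computation gives the conditional form $\KP(a\,|\,n) - 3\KP(\KP(a\,|\,n)\,|\,n) \leq \KS(a\,|\,n) - \KP(\KS(a\,|\,n)\,|\,n) + O(1)$ that is invoked in the proof of Theorem~\ref{th:solovay2allN}.

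The only non-mechanical point — and the step I expect to be the real obstacle — is realizing that one must bound $d$ before estimating $\KP(\KS(a))$. The naive route, $\KP(a) - \KS(a) \leq \KP(\KS(a))$ followed by a direct estimate of $\KP(\KS(a))$ by $\KP(\KP(a))$ plus a (merely doubly-logarithmic) error, breaks down exactly when $\KP(a)$ is a very simple number while $\KS(a)$ is not: then that error term is not $O(1)$ and is not absorbed by $3\KP(\KP(a))$. The self-referential inequality $d \leq t + 2\log(d+1) + O(1)$ is precisely what excludes this bad case — if $\KP(a)$ is simple then $d$ is bounded, hence $\KS(a)$ is simple as well, and the whole left-hand side collapses to $O(1)$ — and the generous coefficient $3$ (rather than $2$) is exactly the slack needed to swallow the logarithmic error terms coming from the crude bound $\KP(j) \leq 2\log(j+1) + O(1)$.
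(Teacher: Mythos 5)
Your proof is correct, and it diverges from the paper's argument after the shared first step. Both proofs begin by rewriting the claim as $\KP(a)-\KS(a)+\KP(\KS(a))\le 3\KP(\KP(a))+O(1)$ and using $\KP(a)-\KS(a)\le\KP(\KS(a))+O(1)$ (prepend a prefix-free code of the length of a shortest plain program), which reduces everything to comparing $\KP(\KS(a))$ with $\KP(\KP(a))$. The paper makes that comparison by citing Solovay's theorem $\KP(a)-\KS(a)=\KP(\KP(a))+O(\KP(\KP(\KP(a))))$ as a black box, deducing $|\KP(\KP(a))-\KP(\KS(a))|\le O(\log\KP(\KP(a)))$ and then checking $2\KP(\KS(a))\le 3\KP(\KP(a))+O(1)$. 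You avoid Solovay's formula entirely: writing $d=\KP(a)-\KS(a)$ and $t=\KP(\KP(a))$, you combine the subadditivity bound $\KP(\KS(a))\le t+\KP(d)+O(1)$ with $d\le\KP(\KS(a))+O(1)$ and $\KP(j)\le 2\log(j+1)+O(1)$ to get the self-improving inequality $d\le t+2\log(d+1)+O(1)$, and bootstrap it to $d,\ \KP(\KS(a))\le t+2\log(t+2)+O(1)$, after which $4\log(t+2)\le t+O(1)$ closes the argument. Your route is more elementary and self-contained --- Solovay's formula is itself a nontrivial theorem, and your bootstrap replaces it with routine subadditivity --- at the cost of a somewhat longer computation and of yielding only a one-sided comparison between $\KP(\KS(a))$ and $\KP(\KP(a))$, which is all the lemma requires. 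Your closing observation that every step relativizes is also on point, since the paper actually invokes the conditional form of this lemma (with $a=\KP(n)$ and condition $n$) in the proof of Theorem~\ref{th:solovay2allN}.
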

\noindent
for $a=\KP(n)$ and $b=n$. The proof of this lemma will be given after we finish the rest of the proof.
  
 For the first part we follow the same structure as above. Using equation~\eqref{eq:gap}, we see that it is enough to show that with $O(1)$-precision (we omit $O(1)$-terms in the sequel) we have $\KP(x| \KP(n),n) \leq n - |q| + \KP(|q|\,|n)$ and $\KP(x|n)=n$ (the latter equality implies $\KS(x)=n$). Let us prove these two statements:
 
\begin{itemize}
   \item 
     Knowing $|q|$, we can split $x$ in two parts $q$ and $r$. 
     Hence, $\KP(x|\KP(n),n,|q|) = \KP(q,r|n,\KP(n),|q|)$. 
     Given $n, \KP(n), |q|$ we can search for a program of length $|q|$ 
     that on input $n$ outputs $\KP(n)$; the first one is $q$.
     Hence, $$\KP(q,r|n,\KP(n),|q|) = \KP(r|n,\KP(n),|q|)=n-|q|$$ 
     (the last equality is due to the choice of $r$), and
     therefore $$\KP(x|\KP(n),n) \leq n - |q| + \KP(|q|\,|n).$$

   \item 
     For $\KP(x|n)$ we use additivity:
     \begin{equation*}
      \KP(x | n) \ge \KP(x|n, |q|) = \KP(q,r|n, |q|) = \KP(q|n, |q|) + \KP(r | q, \KP(q|n,|q|), n). 
     \end{equation*}
     By choice of $q$ we have $\KP(q|n, |q|) = |q|$.
     The last term is $\KP(r | q, |q|, n)$ and is equal to $\KP(r|q,n) = n - |q|$ by choice of~$r$.
     Hence, $\KP(x|n) \ge |q| + (n - |q|) = n$.  Since $x$ is an $n$-bit string, we have also 
     $\KP(x|n)\le n$.
 \end{itemize}
\end{proof}

\noindent
Theorem~\ref{th:solovay2allN} is proved except for the proof of Lemma~\ref{lem:plainPrefixHelp}, which we give now.

\begin{proof}
The condition $b$ is used everywhere, so the statement is a conditional version of the inequality 
     $$
\KP(a) - 3\KP(\KP(a)) \leq \KS(a) - \KP(\KS(a))+O(1).
     $$
As usually, the proof of the conditional version follows the unconditional one, so we 
consider the unconditional version for simplicity.     
 Note that $\KP(a) - \KS(a) \leq \KP(\KS(a))$. 
 Indeed, every program $p$ for plain machine can be converted to a self-delimiting one by
 adding a self-delimiting description of $|p|$ before $p$.
 Hence it remains to show that
 $2\KP(\KS(a)) \leq 3\KP(\KP(a))+O(1)$. This follows from another Solovay's result from~\cite{Solovay} (see also~\cite{Downey}) which says that 
 \[
 \KP(a) - \KS(a) = \KP(\KP(a)) + O(\KP(\KP(\KP(a)))).
 \]
From this result we conclude that
 \[
  |\KP(\KP(a)) - \KP(\KS(a))| \leq O(\log \KP(\KP(a))) ,
 \]
 and this is enough for our purpose.
\end{proof}

\section{Game-theoretic proof of Miller's theorem}
\label{sec:miller}

In this section we provide a simple game-based proof of a result due to
Miller~\cite{MillerContrasting}; as we have seen in the introduction, this result implies that
$C$-randomness differs from $K$-randomness.  (The original proof in~\cite{MillerContrasting} uses a different scheme that involves the Kleene fixed-point theorem.)

\begin{theorem}[J.~Miller]\label{th:solovay2Gen}
  For any co-enumerable set $Q$ of strings that contains a string of every length, 
  there exist infinitely many $x$ in $Q$ such that $d_K(x) \ge  \logg |x| - O(\loggg |x|)$. 
\end{theorem}

Solovay's result follows by choosing $Q$ to be the set of strings $x$ such that $d_C(x) \le c$ 
for large enough $c$ (then $Q$ contains strings of all lengths); this set is co-enumerable. One
can also conclude that the set of strings $x$ with $d_K(x)<c$ is not co-enumerable for large enough
$c$ (when this set contains strings of all lengths).  One may also observe that because of Theorem~\ref{th:solovayOptimal}, this result also implies a weak form of G\'acs' theorem: there exist
infinitely many $x$ such that $\KP(\KP(x)|x) \ge \log |x| - O(\logg |x|)$.

\medskip
\begin{proof} Let us consider the following game specified by a natural number $C$ and a
finite family of disjoint finite sets $S_1,\dots,S_N$. During the game each element  
$s\in S=\cup_{j=1}^N S_j$ is labeled by two non-negative rational numbers $A(s)$ and $B(s)$ called ``Alice's weight'' and ``Bob's weight''. Initially all weights are zeros. Alice and Bob make alternate moves.
On each move each player may increase her/his weight of several elements~$s\in S$.

Both players must obey the following restrictions for the total weight: 
       $$
\sum_{s\in S}A(s)\le1\quad\text{and}\quad \sum_{s\in S}B(s)\le1.
       $$
In addition, Bob must be ``fair'': for every $j$ Bob's weights of all $s\in S_j$ must be equal. That means that basically Bob assigns weights to $j\in\{1,\dots,N\}$ and Bob's weight $B(j)$ of $j$ is then evenly distributed among all $s\in S_j$ so that $$ B(s)=B(j)/\#S_j $$ for all $s\in S_j$. Alice does not need to be fair.

This extra requirement is somehow compensated by allowing Bob to ``disable'' certain $s\in S$ (this
does not decrease the size of $S$). Once
an $s$ is disabled it cannot be ``enabled'' any more. Alice cannot disable or enable anything. For
every $j$, Bob is not allowed to disable \emph{all} $s\in S_j$: every set $S_j$ should contain at least one element that is enabled (=not disabled).

The game is infinite. Alice wins if at the end of the game (or, better to say, in the limit) there
exists an enabled $s\in S$ such that
      $$
\frac{A(s)}{B(s)}\ge C.
      $$

Now we have to explain two things: why Alice has a (computable) winning strategy in the
game (with some assumptions on the parameters of the game) and why this implies Miller's theorem.

\begin{lemma}\label{lem:gameSolovayGen}
Assume that $N\ge2^{8C}$ and $\#S_j\ge 8C$ for all $j\le N$.
Then Alice has a computable winning strategy.

\end{lemma}

Let us show first why this statement implies the theorem. First we show how for a given $c$ one can find some $x\in Q$ with $d_K(x)\ge c$. (Then we look more closely on the length of this $x$ and check that indeed the statement of Theorem~\ref{th:solovay2Gen} is true.)
Consider the following values of the game parameters:
     $$
C=2^{c} \quad\text{and}\quad N= 2^{8C}=2^{2^{c+3}}
    $$
Let us take the sets of all strings of length
           $$\log 8C+1,\dots, \log 8C+N$$
as $S_1,\ldots,S_N$. 

Consider the following strategy for Bob in this game. He enumerates the complement of $Q$ and
disables all its elements. In parallel, he approximates the prefix complexity from above; once he
finds out that $K(n)$ does not exceed some~$l$, he increases the weights of all $2^n$ strings of
length~$n$ up to $2^{-l-n}$. Thus at the end of the game $B(x)=2^{-K(n)-n}$ for all $s\in S$ that
have length $n$ (i.e., for $s\in S_j$ where $j=n-\log 8C$). Note that Bob's total weight never exceeds 
its limit, since $\sum_n 2^{-\KP(n)}\le 1$.

Alice's limit weight function $x\mapsto A(x)$ is lower semi-computable given $c$, as both Alice's and Bob's strategies are computable given $c$. Therefore, since prefix complexity is equal to the logarithm of a priori probability (coding lemma),
    $$\KP(s|c)\le -\log A(s)+O(1)$$
for all $s\in S$. As Alice wins, there exists a string $s\in Q$ of some length $n\le N+\log 8C$ such that $A(s)/B(s)\ge C$, i.e., 
$$
 -\log A(s)\le -\log B(s)-c=\KP(n)+n-c.
$$
This implies that
$$
 \KP(s|c)\le \KP(n)+n-c+O(1),
$$
and
$$ 
  \KP(s) \le \KP(n)+n -c+O(\log c).
$$

Now let us look at the length of a string $s$ constructed for a given $c$. The maximal possible
length is $\log (8C)+N$, which is $O(N)$ since $N=2^{8C}$ is much bigger than $\log(8C)$. So the
length is at most $$ O\bigl(2^{2^{c+3}}\bigr).$$ 
In other terms, $c+3\ge \logg |s|-O(1)$ and the deficiency of $s$ is at least $c - O(\log c)$, which
is at least $\logg |s|-O(\loggg|s|).$
\end{proof}

It remains to prove the Lemma by showing a winning strategy for Alice.

\begin{proof}[Proof of Lemma \ref{lem:gameSolovayGen}.]  
The strategy is rather straightforward. The main idea is that playing
with one $S_i$, Alice can force Bob to spend twice more weight than she does. Then she switches to
the next $S_i$, and so on until Bob's weight is exhausted while she has solid reserves. To achieve her
goal on one set of $M$ elements, Alice assigns sequentially weights $1/2^M,
1/2^{M-1},\ldots,1/{2^1}$ and after each move waits until Bob increases his weight enough to satisfy the
game requirements, or disables the
corresponding element. Since he cannot disable all elements and is forced to use the same weights
for all elements while Alice puts more than half of the weight on the last element, Bob has factor
$M/2$ as a handicap, and we may assume that $M/2$ beats $C$-factor that Bob has in his favor.

Now the formal details. Assume first that $\#S_j=M=4C$ for all $j$ and $N=2^{M}$. (We will show
later how to adjust the proof to the case when $|S_j|\ge8C$ and $N\ge2^{8C}$.)

Alice picks an element $x_1\in S_1$ and assigns the weight $1/2^{M}$ to $x_1$.  Bob (to avoid losing
the entire game) has either to assign a weight of more than $1/C2^{M}$ to all elements in $S_1$, or
to disable $x_1$. In the second case Alice picks another element $x_2\in S_1$ and assigns a (twice
bigger) weight of $2/2^{M}$ to it. Again Bob has a dilemma: either  to increase the weight for all
elements of $S_1$  up to $2/C2^{M}$, or to disable $x_2$. In the second case Alice picks $x_3$,
assigns a weight of $4/2^{M}$ to it, and so on. (If this process continues long enough, the last
weight would be $2^{M-1}/2^M=1/2$.)

As Bob cannot disable all the elements of $S_1$, at some step $i$ the first case occurs, and Bob
assigns a weight greater than $2^{i-1}/C2^M$ to all the elements of $S_1$.  Then Alice stops playing
on $S_1$.  Note that the total Alice's weight of $S_1$ (let us call it $\beta$) is the sum of the
geometric sequence: $$
\beta=1/2^{M}+2/2^M+\ldots +2^{i-1}/2^M<2^i/2^M\le1.
       $$
Thus  Alice obeys the rules. Note that total Bob's weight of $S_1$ is more than
$M2^{i-1}/C2^M=2^{i+1}/2^M$, so it exceeds at least two times the total Alice's weight spent on
$S_1$. This implies, in particular, that Bob cannot beat Alice's weight for the last element if the
game comes to this stage (and Alice wins the game in this case.)

Then Alice proceeds to the second set $S_2$ and repeats the procedure. However this time she uses weights
     $
\alpha/2^{M},2\alpha/2^M,\dots,
    $
where $\alpha=1-\beta$ is the weight still available for Alice. Again she forces Bob to use twice
more weight than she does. Then Alice repeats the procedure for the third set $S_3$ with the
remaining weight etc.

Let  $\beta_j$  be the total weight Alice spent on the sets $S_1,\dots,S_j$, and
$\alpha_j=1-\beta_j$ the weight remaining after the first $j$ iterations. By construction, Bob's
total weight spent on sets $S_1,\dots,S_j$ is greater than $2\beta_j$, so we have $2\beta_j<1$ and
hence $\alpha_j> 1/2$. Consequently, Alice's total weight of each $S_j$ is more than $1/2^{M+1}$.
Hence after at most $N=2^{M}$ iterations Alice wins.

If the size of $S_j$ are large but different, we need to make some modifications. (We cannot use the
same approach starting with $1/2^M$ where $M$ is the size of the set: if Bob beats the first element
with factor $C$, he spends twice more weight than Alice but still a small amount, so we do not have
enough sets for a contradiction.)

However, the modification is easy. If the number of elements in $S_j$ is a multiple of $4C$ (which
is the case we use), we can split elements of $S_j$ into $4C$ groups of equal size, and treat all
members of each group $G$ as one element. This means that if the above algorithm asks to assign to
an ``element'' (group) $G$ a weight $w$, Alice distributes the weight $w$ uniformly among members of
$G$ and waits until either Bob disables all elements of the group or assigns $4C$-bigger weight to
all elements of $S_j$.

If $S_j$ is not a multiple of $4C$, the groups are not equal (the worst case is when some groups
have one element while other have two elements), so to compensate for this we need to use $8C$
instead of $4C$.

Note that excess in the number of groups (when $N$ is bigger than required $8C$) does not matter at
all, we just ignore some groups.
\end{proof}

%
%


\bibliographystyle{asl}
\bibliography{kolmogorov}

\end{document}